\newtheorem{lemma}{Lemma}
\newtheorem{theorem}{Theorem}
\newtheorem{definition}{Definition}
\newcommand{\commentout}[1]{}
\newcommand{\defer}[1]{}
\begin{document}

\title{SeeSite: Efficiently Finding Co-occurring Splice Sites and Exon Splicing Enhancers}
\author{
			Christine Lo\thanks{These authors contributed equally to this work.}~\thanks{Department of Computer Science and Engineering, University of California, San Diego}\
\and 	Boyko Kakaradov\footnotemark[1]~\thanks{ Bioinformatics Program, University of California, San Diego, USA}
\and		Daniel Lokshtanov\footnotemark[2]
\and 	Christina Boucher\footnotemark[2]
}

\maketitle

\begin{abstract}

The problem of identifying splice sites consists of two sub-problems: finding their boundaries, and characterizing their sequence markers. Other splicing elements---including, enhancers and silencers---that occur in the intronic and exonic regions play an important role in splicing activity.  Existing methods for detecting splicing elements are limited to finding either splice sites or enhancers and silencers, even though these elements are well-known to co-occur. We introduce SeeSite, an efficient and accurate tool for detecting splice sites and their complementary exon splicing enhancers (ESEs).  

SeeSite has three stages: graph construction, finding dense subgraphs, and recovering splice sites and ESEs along with their consensus. The third step involves solving {\em Consensus Sequence with Outliers}, an NP-complete string clustering problem. We prove that our algorithm for this problem outputs near-optimal solutions in polynomial time. 
Using SeeSite we demonstrate that ESEs are preferentially associated with weaker splice sites, and splice sites of a certain canonical form co-occur with specific ESEs. 
\end{abstract}

\section{Introduction}
 
Genes in eukaryotes typically consist of the protein-coding DNA sequences, called {\em exons}, which may be interrupted by stretches of non-coding DNA, called {\em introns}, which are spliced out of mRNA.  This RNA splicing is a fundamental biological process that is dictated by sequence markers located at {\em splice sites}, or exon-intron boundaries.  In addition to these splice sites, proximal sequences affect the splicing efficiency by recruiting helper proteins that have the effect of enhancing or silencing the splicing process \cite{Chasin}.  The classic experimental approach to discover exons and their splice sites is to map {\em Expressed Sequence Tags (ESTs)} to the reference genome \cite{EST}.  Whereas, RNA-seq is the modern approach for gene annotation with a greatly increased throughput, but has several more purposes beyond this single task \cite{RNAseq}.  However, the reduced read length and increased noise in this method yields false evidence for many non-existent splicing events, creating an opportunity for computational methods to identify novel splice sites \cite{TopHat,Trinity}.

The problem of identifying splice sites consists of two sub-problems: finding their boundaries, and characterizing their sequence markers.  A {\em marker} is a substring of a RNA molecule that is recognized by a protein.  Markers recognized by the same protein have similar nucleotide sequences.  We define a {\em motif} to be a string that represents the common nucleotide pattern recognized by the protein and thus, a motif is a representative string for a set of markers.  Splice sites can be extremely degenerate and therefore, deviate quite dramatically from their motif.  For example, mammalian 5' and 3' splice sites have the motifs (A/C)AG$\|$GT and AG$\|$GT(A/G)AGT, respectively, where '$\|$' marks the exon-intron or intron-exon boundary.  Less than 5\% of known splice sites match these canonical motifs perfectly \cite{Chasin}.  In fact, more than 60\% of the remaining splice sites have at least 3 mismatches from the consensus sequence.  In such cases where the splice site marker is weak, additional sequence markers nearby serve as binding sites for enhancer or silencer proteins and hence, are control mechanisms for splicing \cite{Barash2010}.

There exists a number of methods that address the problem of identifying splice sites.  These methods look for motif sequences exclusively at the exon-intron or intron-exon boundary.  For example, TopHat \cite{TopHat} maps as many RNA-seq reads as possible to the genome forming exon islands, then analyzes the mapping results to identify the splice sites.  MapSplice \cite{MapSplice}, HMMSplicer \cite{HMMSplicer} and SpliceMap \cite{SpliceMap} improve upon this method by using a refined alignment algorithm for the RNA-seq reads.  All these methods search for the splice sites at a specific position, which is given by the alignment of the reads.

The methods previously discussed focus only on finding splice sites and largely ignored the detection of other splicing elements, such as enhancers and silencers that occur in intron (i.e. ISE and ISS) and exon (i.e. ESE and ESS) regions.  Separate computational programs exist to detect these elements.  These methods fix the motif sequence and search for it in the vicinity of a boundary.  More specifically, they search for $\ell$-mers that are statistically enriched in a set of cases versus controls \cite{ESEfinder,RESCUE-ESE,Fedorov,Sironi,ZLC}. For example, RESCUE-ESE \cite{RESCUE-ESE} is a well-known program that compares exons with weak splice sites to those with strong splice sites reasoning that those with weak splice sites will have more enhancers. Hence, existing computational approaches for identifying splicing motifs fall into two categories based on their aim: those that detect splice sites, and those that find splicing enhancers or silencers.  

While the detection of the splice sites and their complementary splicing elements have been studied and detected separately, co-occurring relationships between them are known to exist.  For example, it has been demonstrated that evolutionary changes that weaken a splice site can be compensated by changes in the exonic splicing enhancer (ESE) or silencer (ESS) \cite{LEWE,xiao}.  Further, this relationship is also illustrated by the fact that many alternative splice sites are weaker than constitutive sites \cite{TS03,ZFG}. In addition, beyond this general relationship of having a weak splice site that is complemented by an enhancer or silencer, there exists evidence for more definite relationships between these two factors.  Xiao et al \cite{xiao} demonstrated such a relationship by showing that intronic splicing enhancers (ISEs) show specificity for different classes of splice site motifs that contribute to exon definition.

To the best of our knowledge there does not exist any well-established computational methods that detect both splice site motifs and other splicing elements, while simultaneously characterizing the relationship between them.  We introduce SeeSite, a computational program that aims at detecting splice sites and discovering splicing enhancers in exonic regions. SeeSite involves three stages: graph construction, finding dense subgraphs, and recovering splice sites and ESEs along with their consensus sequences.  The first two stages can be handled using standard methods. However,  the third stage requires solving a string clustering problem in the presence of noise since a number of the splice sites can be highly degenerate.  We formalize this clustering problem as the {\em Consensus Sequence with Outliers} problem. This combinatorial problem is NP-complete, so we have to settle for heuristic methods to solve it. On the other hand, we show that choosing the parameters to our algorithm cleverly yields strong guarantees on the quality of the output. Specifically we show that our algorithm is an efficient {\em polynomial time approximation scheme} (PTAS) unless the noise completely overwhelms the signal. This allows us to cluster the motifs into different types (corresponding to different enhancer/silencer binding sites), not all of which need to be present in all input sequences in order to identify their consensus.  We extend our theoretical findings by also giving a (less practical) PTAS for {\em Consensus Sequence with Outliers } without any restrictions on the input.

This work describes an algorithmic framework to study the problem of splice site and splicing enhancer discovery in exonic regions. The resulting method, SeeSite, is robust to uncertainty in both the sequence and position of the splice sites and ESEs.  Thus, it is suitable to finding both strong and weak canonical splice motifs and their variable proximal markers in the context of similar splice sites.  SeeSite is one of the first computational tools that aim at detecting not only splice sites but accompanying ESEs.  Hence, we believe it will be a valuable tool in elucidating splicing mechanisms in a variety of species. Our main contributions are summarized below:
\begin{itemize}
\item We develop a theoretical framework for detecting splicing elements, and characterizing the relationship between them and splice sites. 
\item We describe a polynomial-time algorithm for detecting both ESEs and splice sites.  
\item We provide empirical evidence that certain splice sites co-occur with specific ESEs 
\end{itemize}

\section{Consensus Sequence with Outliers}


\subsection{Problem Formulation}

We are interested in detecting splice sites that have similar sequence markers, and among those that have degenerate splice sites we aim to detect ESEs.  Given a set of possible $\ell$-mers (i.e. possible markers), we would like to determine the consensus sequence (i.e. motif), and the subset of $\ell$-mers that are the most degenerate. The following problem formally defines this task. 

\begin{definition} {\bf (Consensus Sequence with Outliers)} We denote $d(x, y)$ to be the Hamming distance between the length-$\ell$ sequences $x$ and $y$.  Given $n$ length-$\ell$ sequences $S = \{s_1, \ldots, s_n\}$ over a finite alphabet $\Sigma$ and nonnegative integer $k$, the aim of the {\em Consensus Sequence with Outliers} problem is to find a consensus sequence, $s$, and subset, $S^* \subset S$, where $n - |S^*| = k$ and $\sum_{\forall t \in S^*} d(t,s)$ is minimal.  
\end{definition}

The problem is NP-hard \cite{boucher_lo_lok}, however, it is amenable to efficient approximation algorithms that are able to work well in practice. Before defining these algorithms, we begin by giving some preliminary definitions. For a set $S$ of length-$\ell$ sequences, we denote the consensus sequence of $S$ as $c(S)$ and define it to be equal to the sequence that is obtained by picking a most-frequent character at every position with ties broken arbitrarily. We note that the tie-breaking will not affect our arguments. We denote the sum Hamming distance between a single sequence $s$ and a set of sequences $S$ as $d(S, s)=\sum_{\forall t \in S^*} d(t,s)$. The {\em Consensus Sequence With Outliers} problem can now be succinctly stated as follows: given a set $S$ of sequences and integer $k$, the objective is to find a subset $S^* \subseteq S$ of size $n^*=n-k$ such that $d(S^*, c(S^*))$ is minimized.

Given a subset $S^* \subseteq S$ we can compute $c(S^*)$ in polynomial-time. If we are given $c(S^*)$ for the optimal solution $S^*$ (but not given $S^*$ itself) then we can recover $S^*$ from $c(S^*)$ and $S$ in polynomial-time since $S^*$ is the set of the $n-k$ sequences in $S$ that are closest to $c(S^*)$. Similarly, given any sequence $x$, we denote $S_x$ as the subset of $S$ containing the $n^*$ sequences closest to $x$. By construction $S_x$ satisfies the following inequality: $d(S', x) \geq d(S_x, x) \geq d(S_x, c(S_x))$ for any subset $S' \subseteq S$ of size $n^*$.

\subsection{Efficient Algorithms for Consensus Sequences with Outliers} \label{sec:ptas}


We give a heuristic algorithm for solving {\em Consensus Sequence with Outliers} based on random sampling. 
The algorithm has two parameters $r$ and $t$. It picks $r$ sequences $S' = (s'_1, s'_2, ... s'_r)$ from $S$ uniformly at random (with replacement), and finds the consensus sequence corresponding to $S'$. It repeats this process $t$ times and outputs the best consensus sequence found. The pseudocode for this algorithm is given in Algorithm~\ref{alg:rand_ptas}. 
\begin{algorithm}[ht]
\caption{}
\label{alg:rand_ptas}
\begin{algorithmic}
\STATE {\bf Input: } $S$, $k$, $r$, $t$.
\STATE {\bf Output:} a sequence $s$ and subset, $S^* \subset S$ of size $n-k$.
\STATE {\bf 2:} try $t$ times:
\STATE \hspace{5mm} {\bf (a):} choose a random subset of $S$ of size $r$, denoted by $S'$.
\STATE \hspace{5mm} {\bf (b):} let $S_{max}$ be a set of $k$ sequences that largest Hamming distance from $c(S')$.
\STATE \hspace{5mm} {\bf (c):} let $S^{*}$ be equal to $S$ with all the sequences in $S_{max}$ removed.
\STATE \hspace{5mm} {\bf (d):} keep track of $c(S^{*})$ with minimum $d$.
\STATE {\bf 3:} Return $c(S^{*})$ with minimum $d$ and the corresponding $S^{*}$.
\end{algorithmic}
\end{algorithm}

\subsubsection{Approximation guarantees.}
In this section we prove guarantees on the quality of the solution output by Algorithm~\ref{alg:rand_ptas}, when the parameters $r$ and $t$ are chosen appropriately. In particular we show that Algorithm~\ref{alg:rand_ptas} is an {\em efficient polynomial time approximation scheme} (EPTAS) for {\em Consensus Sequence with Outliers} if the data does not consist mainly of outliers. A {\em polynomial time approximation scheme} (PTAS) is an algorithm that for every $\epsilon > 0$ runs in polynomial time and outputs a $(1+\epsilon)$-approximate solution. Typically the running time upper bound, while polynomial for every {\em fixed} value of $\epsilon$, grows very rapidly as $\epsilon$ tends to $0$. If the exponent of the polynomial in the running time of the algorithm is independent of $\epsilon$ then the PTAS is said to be an {\em efficient PTAS} (EPTAS). To prove our bounds we prove the following technical lemma, which states that if the sample $S'$ was taken from an (unknown) optimal solution $S^*$, rather than from the entire input set $S$, then in expectation $c(S')$ is almost as good as the consensus sequence for the set $S^*$. 

\begin{lemma}\label{lem:min_d} For all $\epsilon > 0$ and $\sigma$, there exists a value of $r$ such that the following holds: if $S$ is a set of $n$ length-$\ell$ sequences over the alphabet $\Sigma$, where the size of $\Sigma$ is equal to $\sigma$, and $S'$ is a subset of $S$ of size $r$, $(s'_1, s'_2, ... s'_r)$, chosen uniformly at random, then $E[d(S, c(S'))] \leq (1+ \epsilon)d(S, c(S))$. 
\end{lemma}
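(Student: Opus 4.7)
I would prove the lemma by decomposing the Hamming distance position by position and controlling each column independently. For each coordinate $i \in \{1,\ldots,\ell\}$ and each character $a \in \Sigma$, let $p_a^{(i)}$ denote the fraction of sequences in $S$ whose $i$-th character is $a$, and write $c_i = c(S)[i]$ and $c'_i = c(S')[i]$ for the majority characters of $S$ and of the random sample $S'$ respectively. Then $d(S,c(S)) = \sum_i n(1 - p_{c_i}^{(i)})$ and $d(S,c(S')) = \sum_i n(1 - p_{c'_i}^{(i)})$, so by linearity of expectation the excess is
$E[d(S,c(S'))] - d(S,c(S)) = n\sum_i E[p_{c_i}^{(i)} - p_{c'_i}^{(i)}]$,
a nonnegative quantity that I need to bound by $\epsilon \cdot d(S,c(S))$. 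Thus the whole problem reduces to per-column excess estimates.

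The second step is a Hoeffding bound on $\Pr[c'_i = a]$ for each $a \neq c_i$. Sampling with replacement makes the indicator differences $Y_j = \mathbf{1}[s'_j[i]=a] - \mathbf{1}[s'_j[i]=c_i]$ i.i.d.\ with range $\{-1,0,1\}$ and mean $p_a - p_{c_i} \leq 0$, so the event $c'_i = a$ (which requires $\sum_j Y_j \geq 0$) occurs with probability at most $\exp(-r(p_{c_i}-p_a)^2/2)$. Plugging this into
$E[p_{c_i}^{(i)} - p_{c'_i}^{(i)}] = \sum_{a\neq c_i} (p_{c_i}-p_a)\Pr[c'_i = a]$
and invoking the elementary envelope $x e^{-rx^2/2} \leq 1/\sqrt{er}$ for $x\geq 0$, each column contributes at most $(\sigma-1)/\sqrt{er}$ to the expected excess, giving the global additive bound $E[d(S,c(S'))] \leq d(S,c(S)) + n\ell(\sigma-1)/\sqrt{er}$.

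The main obstacle is converting this additive estimate into the multiplicative $(1+\epsilon)$ guarantee claimed by the lemma, because the additive term can dominate when $d(S,c(S))$ is much smaller than $n\ell$. I would handle this by splitting columns into a \emph{noisy} regime and an \emph{almost-monochromatic} regime. In the noisy regime, where $p_{c_i}^{(i)}$ is bounded away from $1$, the per-column OPT cost is $\Omega(n)$, and the $1/\sqrt{er}$ envelope already gives a $(1+\epsilon)$-relative error for $r$ chosen as $\Theta((\sigma/\epsilon)^2)$. In the almost-monochromatic regime, every competitor $a$ has $p_{c_i}^{(i)} - p_a$ close to $1$, so the Hoeffding tail is exponentially small in $r$ rather than merely $1/\sqrt{r}$, and the per-column excess can be charged against the at-least-one unit of OPT present whenever the column is not constant. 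Taking $r$ to be the larger of the two resulting thresholds (a function only of $\epsilon$ and $\sigma$) should then yield the desired bound after summing the two regimes.
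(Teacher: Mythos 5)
Your overall architecture---a column-by-column decomposition via linearity of expectation, a case split on how concentrated the column is, and a Hoeffding bound on the event that a competitor character beats the true plurality character in the sample---is essentially the paper's. Your handling of the noisy regime is sound, and the envelope $x e^{-rx^2/2} \le 1/\sqrt{er}$ summed over the $\sigma-1$ competitors is arguably cleaner than the paper's ``good/bad symbol'' bookkeeping; both yield an admissible $r$ of the form $\mathrm{poly}(\sigma,1/\epsilon)$.

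The gap is in the almost-monochromatic regime. There the additive Hoeffding tail $\Pr[c'_i = a] \le \exp(-r(p^{(i)}_{c_i}-p^{(i)}_a)^2/2) \le e^{-\Omega(r)}$ is a quantity depending only on $r$; it does not shrink as the column gets cleaner. The expected per-column excess is then bounded only by $n(\sigma-1)e^{-\Omega(r)}$, while the budget you propose to charge it against is $\epsilon\, d_i$ with $d_i$ (the number of non-plurality entries) possibly as small as $1$. Since the lemma quantifies $r$ before $S$---and the EPTAS of Theorem~\ref{lem:min_d_rand} needs $r$ to depend only on $\epsilon$ and $\sigma$, not on $n$---no fixed $r$ makes $n e^{-\Omega(r)} \le \epsilon$ uniformly in $n$, so the charging step fails for large inputs. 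What is needed in this regime is a tail bound that decays with $d_i/n$ itself rather than with a constant gap: the paper bounds the probability that the sampled plurality differs from $c_i$ by the raw binomial tail $\sum_{x \ge \lceil r/2\rceil}\binom{r}{x}(d_i/n)^x(1-d_i/n)^{r-x} \le 2^{r+1}(d_i/n)^{\lceil r/2\rceil}$, which for $r \ge 8$ and $d_i/n \le \epsilon \le 1/16$ is at most $2(d_i/n)^2$; the excess is then at most $q n \le 2(d_i/n)\, d_i \le 2\epsilon\, d_i$ and the troublesome factor of $n$ is absorbed. Replacing your Hoeffding estimate in the clean regime with such a multiplicative (direct binomial or relative-entropy Chernoff) bound closes the gap; the rest of your plan goes through.
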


\begin{proof} 
We prove that there exists a $r$ such that $E[d(S, c(S'))] \leq (1+ 2\epsilon)d(S, c(S))$. Applying this weaker inequality with $\epsilon' = \epsilon/2$ then proves the statement of the Lemma. We assume, without loss of generality, that $c(S)$ is equal to $0^{\ell}$, $\epsilon \leq 1/16$, and $r \geq 8$. We restrict interest to column $i$ of $S$, where $0 \leq i \leq \ell$, let $d_i$ be the number of nonzero symbols in column $i$ and let $z_i = n-d_i$. Observe that $d(S, c(S'))$ is equal to the sum over $i$ of the number of sequences $s \in S$ such that $s[i] \neq c(S')[i]$. By linearity of expectation it is sufficient to prove that for every $i$ we have  $E[d(S[i], c(S')[i])] \leq (1+ 2\epsilon)d_i$.

First, we assume $d_i$ is at most $\epsilon n$. Let $q$ be the probability that $c(S')[i] \neq 0$. It follows that $E[d(S[i], c(S')[i])]$ is at most $d_i(1-q) + qn$. We determine an upper bound on the probability $q$ as follows: 
\begin{eqnarray*}
q & \leq & \sum_{x = \lceil r / 2 \rceil}^r {r \choose x} \left( d_i / n \right)^x \left(1 - d_i / n \right)^{r - x} \leq  \sum_{x = \lceil r / 2 \rceil}^r 2^r \left( d_i / n \right)^x \\
& \leq & 2^r \left( d_i / n \right)^{\lceil r / 2 \rceil} \frac{1- \left( d_i / n \right)^{\lceil r / 2 \rceil}}{1 - (d_i/n)}.
\end{eqnarray*}
Since $d_i / n \leq \epsilon \leq 1/16$, we get: 
\begin{eqnarray*}
q & \leq & 2^{r + 1}  \left(d_i / n \right)^{\lceil r / 2 \rceil} \leq 2^{r + 1}  \epsilon^{\lceil r /4 \rceil} \left( d_i / n \right)^{\lfloor r / 4 \rfloor} \\
& \leq & 2^r \left( \frac{1}{16}\right)^{\lceil r / 4 \rceil} \cdot 2 \left( d_i / n \right)^{\lfloor r / 4 \rfloor} = 2 \left( d_i / n \right)^{\lfloor r / 4 \rfloor}.
\end{eqnarray*}
It follows from the last inequality, and that $r \geq 8$, that $q \leq 2 \left( d_i / n \right)^2$. Hence, we obtain the following bound on $E[d(S[i], c(S')[i])]$:

\[E[d(S[i], c(S')[i])]  \leq d_i(1 - q) + qn   
					   		\leq d_i + 2 \left( \frac{d_i}{n}\right)^2 n  
							\leq (1 + 2 \epsilon)d_i 
\]

Next, we assume that  $d_i > \epsilon n$.  We say that a symbol $\alpha \in \Sigma$ is a {\em good} symbol if there are at least $z_i-n\epsilon^2$ sequences in $S$ that have the symbol $\alpha$ at column $i$; any symbol that is not good is {\em bad}. If $c(S')[i]$ is a good symbol then $d(S[i], c(S')[i])$ is at most $d_i + n\epsilon^2$ and hence, is at most $(1+\epsilon)d_i$ since $d_i > \epsilon n$. 
Let $p$ be the probability that $c(S')[i]$ is a bad symbol then, $E[d(S[i], c(S')[i])]$ is upper bounded by $(1-p)(1+\epsilon)d_i + pn$. Lastly, we determine an upper bound on $p$ to complete the proof.

Let $\alpha$ be a bad symbol and $p_\alpha$ be the probability that $c(S')[i]$ is equal to $\alpha$. We note that in order for $c(S')[i]$ to be $\alpha$, there has to be more positions equal to $\alpha$ than $0$ in $S'[i]$.  Let $X$ be the difference between the number of positions equal to $\alpha$ and the number of positions equal to $0$ in $S'[i]$.  It follows that $p_\alpha \leq \Pr[X \geq 0]$.
Let $X_j$ be an indicator variable which is $1$ if $s'_j[i]$ is equal to $\alpha$, -1 if it is equal to $0$, and $0$ otherwise.  Since $\alpha$ is a bad symbol, there are at least $\epsilon^2$ more positions equal to $0$ than positions equal to $\alpha$ in $S'[i]$ and therefore, $E[X_j] = \Pr[s'_j[i] = 0] - \Pr[s'_j[i] = \alpha ] \leq - \epsilon^2$. By linearity of expectation, we obtain $E[X] = \Sigma_{j=1}^r E[X_j] \leq - r\epsilon^2$. Using this inequality, we get $\Pr[X \geq 0] \leq \Pr[X - E[X] \geq r\epsilon^2]$. Since the $X_j$ variables are independent and difference between the upper and lower bound of $X_j$ is $2$, we can use Hoeffding's inequality to obtain the following bound.
$$\Pr[X-E[X] \geq r\epsilon^2] \leq \exp \left( \frac{-2 r^2 \epsilon^4}{r 2^2} \right)= \exp \left( \frac{r \epsilon^4}{2}\right)$$
By choosing $r= \min\left(n,\max\left(\frac{2\ln(\frac{\sigma}{\epsilon^2})}{\epsilon^4}, 8\right)\right)$, we get $p_\alpha \leq \frac{\epsilon^2}{\sigma}$. 
Finally, we bound $p$ as follows: $p \leq \sum\limits_\alpha p_\alpha \leq \sigma \frac{\epsilon^2}{\sigma} = \epsilon^2$. We can now use the upper bound on $p$ and our assumption that $d_i > \epsilon n$ to bound $E[d(S[i], c(S')[i])]$:
$$ E[d(S[i], c(S')[i])] 	  \leq (1-p)(1+\epsilon)d_i + pn  \leq (1+ \epsilon)d_i + \epsilon^2n  \leq (1+ 2\epsilon)d_i.$$
\hfill $\Box$
\end{proof}

If the number of outliers is small, then with reasonably high probability a small random subset of the input sequences will not contain any outliers. If the random sample does not contain outliers we can use Lemma~\ref{lem:min_d} to tie the quality of the output solution with the quality of the optimum solution. Based on this intuition we can prove the following theorem.

%
%
%
%
%
\begin{theorem} \label{lem:min_d_rand} There exists a randomized EPTAS for {\em Consensus Sequence with Outliers} for inputs when $k \leq cn$ for $c < 1$. The algorithm runs in time $\frac{1}{(1-c)^r} \cdot f(\epsilon)(n\ell)^{O(1)}$ and outputs a $(1+\epsilon)$-approximate solution with probability $1/2$.
\end{theorem}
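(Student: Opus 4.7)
The plan is to analyze Algorithm~\ref{alg:rand_ptas} with carefully chosen parameters $r$ and $t$, reducing the theorem to Lemma~\ref{lem:min_d} by conditioning on the event that the random sample avoids all outliers. Fix an optimal solution $S^{*} \subseteq S$ of size $n-k$ and write $\mathrm{OPT} = d(S^{*}, c(S^{*}))$. Since $|S^{*}| \geq (1-c)n$, when the algorithm draws $S' = (s'_1, \ldots, s'_r)$ uniformly at random with replacement from $S$, the probability that every $s'_j$ lies in $S^{*}$ is at least $(1-c)^r$. Conditioned on this ``clean-sample'' event, the $s'_j$ are i.i.d.\ uniform draws from $S^{*}$, which is exactly the hypothesis of Lemma~\ref{lem:min_d} applied with $S^{*}$ in place of $S$.

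Next I would tie the quality the algorithm actually records to $d(S^{*}, c(S'))$. The algorithm scores a candidate by computing $S_{c(S')}$, the $n-k$ sequences of $S$ closest to $c(S')$, and then measuring $d(S_{c(S')}, c(S_{c(S')}))$. Applying the preliminary inequality $d(T, x) \geq d(S_x, x) \geq d(S_x, c(S_x))$ --- valid for any size-$(n-k)$ subset $T \subseteq S$ --- with $x = c(S')$ and $T = S^{*}$ gives
\[
d(S_{c(S')}, c(S_{c(S')})) \;\leq\; d(S_{c(S')}, c(S')) \;\leq\; d(S^{*}, c(S')).
\]
So it suffices to bound $d(S^{*}, c(S'))$ in terms of $\mathrm{OPT}$.

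To do so I would invoke Lemma~\ref{lem:min_d} with accuracy parameter $\epsilon/4$ and the prescribed sample size $r = r(\epsilon,\sigma)$, yielding, in the conditional world where $S' \subseteq S^{*}$, $E[d(S^{*}, c(S'))] \leq (1+\epsilon/4)\,\mathrm{OPT}$. Markov's inequality then gives $\Pr[\,d(S^{*}, c(S')) \leq (1+\epsilon)\,\mathrm{OPT} \mid S' \subseteq S^{*}\,] = \Omega(\epsilon)$. Combined with the $(1-c)^r$ lower bound on the clean-sample event, one iteration of Algorithm~\ref{alg:rand_ptas} produces a $(1+\epsilon)$-approximate solution with probability at least $\Omega\!\bigl(\epsilon(1-c)^r\bigr)$. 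Taking $t = \Theta\!\bigl(1/(\epsilon(1-c)^r)\bigr)$ independent trials boosts the success probability to $1/2$, and since the algorithm returns the best candidate encountered, the $(1+\epsilon)$ guarantee is inherited. Each trial costs $(n\ell)^{O(1)}$ time, yielding the claimed total runtime $(1-c)^{-r} \cdot f(\epsilon)(n\ell)^{O(1)}$.

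The main obstacle I expect is this first reduction. Lemma~\ref{lem:min_d} bounds $E[d(S, c(S'))]$ relative to $d(S, c(S))$, which in the presence of adversarial outliers can be arbitrarily larger than $\mathrm{OPT}$, so the lemma cannot be applied directly to the full input. The restriction $k \leq cn$ with $c < 1$ fixed is exactly what makes the clean-sample event $S' \subseteq S^{*}$ occur with non-negligible probability $(1-c)^r$; this is also where the $(1-c)^{-r}$ factor in the running time --- and hence the dependence on the hypothesis on $k$ --- ultimately originates.
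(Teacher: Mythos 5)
Your proposal is correct and follows essentially the same route as the paper: lower-bound the probability $(1-c)^r$ that the sample avoids all outliers, apply Lemma~\ref{lem:min_d} to $S^{*}$ in place of $S$, convert the expectation bound to a constant-in-$\epsilon$ success probability via Markov, and amplify over $t$ trials. If anything, your write-up is slightly more careful than the paper's in one spot --- you explicitly invoke the chain $d(S_{c(S')}, c(S_{c(S')})) \leq d(S_{c(S')}, c(S')) \leq d(S^{*}, c(S'))$ to connect the quantity the algorithm actually scores to the quantity Lemma~\ref{lem:min_d} controls, a step the paper's proof leaves implicit.
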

\begin{proof} 
The algorithm selects a value for $r$ such that for a random subset $S'$ of the unknown optimal solution $S^*$ the inequality $E[d(S^*, c(S'))] \leq (1+\frac{\epsilon}{3})d(S^*, c(S^*))$ holds. It follows from Lemma~\ref{lem:min_d} that this can be done so that $r$ only depends on $\epsilon$.
We show that a single iteration of the outer loop of Algorithm~\ref{alg:rand_ptas} with this choice for $r$ yields a $(1+\epsilon)$-approximate solution with probability $(1-c)^r \cdot f(\epsilon)$. Then setting $t = O\left(\frac{1}{(1-c)^r \cdot f(\epsilon)} \right)$ yields the statement of the theorem. 

It remains to find a sufficient lower bound of the probability that the set returned by a single iteration of the outer loop of Algorithm~\ref{alg:rand_ptas} is a $(1+\epsilon)$-approximation. Since $k \leq cn$, it follows that the probability that $S'$ is taken from an (unknown) optimal solution $S^*$ is at least $(\frac{n-cn}{n})^r = (1-c)^r$.  If $S'$ is taken from $S^*$ then by Lemma~\ref{lem:min_d} we have that $E[d(S^*, c(S'))] \leq (1+\frac{\epsilon}{3})d(S^*, c(S^*))$. By Markov's inequality \cite[p. 311]{GR} the probability that $d(S^*, c(S'))$ exceeds expectation by a factor at least $1+\frac{\epsilon}{3}$ is at most $\frac{1}{1+\frac{\epsilon}{3}}$. Hence, with probability $f(\epsilon)$ for some function $f$ of $\epsilon$ we have that: $$d(S^*, c(S')) \leq \left( 1+\frac{\epsilon}{3} \right) d(S^*, c(S^*)) \cdot \left( 1+\frac{\epsilon}{3} \right),$$ which is at most $(1+\epsilon)d(S^*, c(S^*))$ when $\left( \frac{\epsilon}{3} \right)^2 \leq \frac{\epsilon}{3}$. 
In particular, this holds if $\epsilon \leq 3$, concluding the proof. 
\hfill $\Box$ \end{proof}
We note that one would expect natural inputs to contain substantially fewer outliers than $n/2$, and that Markov's inequality is a very pessimistic bound for the probability of achieving expectation. Hence, it is likely that for reasonable inputs Algorithm~\ref{alg:rand_ptas} performs much better in practise than the proved bounds. In fact, on our synthetic data the algorithm vastly outperformed the theoretical bounds. 

Using Lemma~\ref{lem:min_d} we can also get a simple deterministic PTAS (but not an EPTAS) for {\em Consensus Sequence with Outliers} without any assumptions on the relationship between $k$ and $n$. Specifically we prove the following theorem.
\begin{theorem} \label{lem:min_d_det} There exists a PTAS for {\em Consensus Sequence with Outliers}. \end{theorem}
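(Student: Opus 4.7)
The plan is to derandomize the strategy of Theorem~\ref{lem:min_d_rand} by brute-force enumeration over all $r$-tuples of $S$, which removes the dependence on $k \leq cn$ at the cost of letting the exponent of $n$ depend on $\epsilon$. This trade-off is exactly what separates a PTAS from an EPTAS, so it should suffice.

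Concretely, I would choose $r = r(\epsilon, \sigma)$ as in Lemma~\ref{lem:min_d}, applied with a shrunk tolerance (say $\epsilon' = \epsilon/2$). The algorithm then enumerates every ordered $r$-tuple $S' = (s'_1, \ldots, s'_r) \in S^r$ (sampling with replacement, so that the case $n^* < r$ causes no trouble). For each enumerated $S'$, compute the consensus $c(S')$, then form $S_{c(S')}$, the set of the $n^* = n-k$ sequences in $S$ closest to $c(S')$, and evaluate the objective $d(S_{c(S')}, c(S_{c(S')}))$. Return the $(S', c(S'), S_{c(S')})$ minimizing this quantity.

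For correctness, let $S^*$ be any optimal solution and let $\mathrm{OPT} = d(S^*, c(S^*))$. Apply Lemma~\ref{lem:min_d} with $S^*$ playing the role of the universe: a uniformly random $r$-tuple drawn from $S^*$ satisfies
\[
  E\bigl[d(S^*, c(S'))\bigr] \;\leq\; (1+\epsilon')\, d(S^*, c(S^*)) \;=\; (1+\epsilon')\,\mathrm{OPT}.
\]
Since this is an average over finitely many tuples, at least one specific $r$-tuple $S' \in (S^*)^r$ attains the bound. That tuple appears in our enumeration (because $S^* \subseteq S$), and for this $S'$ we get $d(S_{c(S')}, c(S_{c(S')})) \leq d(S_{c(S')}, c(S')) \leq d(S^*, c(S')) \leq (1+\epsilon')\,\mathrm{OPT}$, where the first inequality uses that $c(S_{c(S')})$ is the optimal consensus for its set, and the second uses that $S_{c(S')}$ consists of the $n^*$ sequences closest to $c(S')$ in $S$ (so it is at least as good as $S^*$ against $c(S')$). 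Taking $\epsilon' = \epsilon/2$ (or adjusted appropriately) gives a $(1+\epsilon)$-approximation.

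The running time is $|S|^r \cdot \mathrm{poly}(n, \ell) = n^{r(\epsilon)} \cdot \mathrm{poly}(n, \ell)$, which is polynomial for every fixed $\epsilon$ but has an exponent that blows up with $1/\epsilon$, hence a PTAS rather than an EPTAS. The main conceptual step is the standard averaging derandomization: we cannot sample from the unknown $S^*$, but by enumerating all $r$-tuples of $S$ we are guaranteed to try, among other things, every $r$-tuple drawn from $S^*$, and at least one of them matches the expectation bound of Lemma~\ref{lem:min_d}. The only minor pitfall is the possibility that $n^* < r$, which is handled cleanly by allowing repeated elements in the enumerated tuples.
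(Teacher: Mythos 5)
Your proposal is correct and follows essentially the same route as the paper's proof: choose $r$ via Lemma~\ref{lem:min_d}, enumerate all $n^r$ size-$r$ samples, note by averaging that some sample from $S^*$ achieves the expectation bound, and conclude via the chain $d(S_x, c(S_x)) \leq d(S_x, x) \leq d(S^*, x) \leq (1+\epsilon)d(S^*, c(S^*))$ for $x = c(S')$. The only differences are cosmetic (your extra $\epsilon' = \epsilon/2$ slack and the remark about sampling with replacement when $n^* < r$), both harmless.
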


\begin{proof}
It follows from Lemma~\ref{lem:min_d} that there exists an integer $r$ such that if $S'$, the set of $r$ of sequences chosen from $S$, is from an (unknown) optimal solution $S^*$ then
$E[d(S^*, c(S'))] \leq (1+\epsilon)d(S^*, c(S^*))$. Some subset $S'$ of $S^*$ must achieve expectation. The algorithm guesses this set $S'$ by trying all possible $n^r$ subset of $S$ of size $r$. Let $x=c(S')$. The algorithm returns the set $S_x$ of the $n^*$ sequences closest to $x$. This set satisfies $d(S_x, c(S_x)) \leq d(S_x, x) \leq d(S^*, x) \leq (1+\epsilon)d(S^*,c(S^*))$, concluding the proof. \hfill $\Box$
\end{proof}

\section{System and Methods}

The basic algorithm behind the SeeSite software goes through the following phases: graph construction, identification of dense subgraphs, and recovery of splicing sites or ESEs.  In this section we will discuss each of these steps in greater detail.  The inputs to SeeSite are $m$ the minimum subgraph size, the maximum number of total mismatches $d$, the number of outlier sequences $k$, and $b$ the maximum number of mismatches for the existence of an edge.  In addition, there is the option to restrict the search to ESEs or splice junctions that have a specific canonical form.    

\subsubsection{Graph Construction}\label{graph_construction} 
We construct a graph from the data, with each vertex representing an $\ell$-mer. The goal of the construction is to ensure that dense subgraphs correspond to closely related subsequences. The dense subgraphs will then be passed on to later stages of the algorithm for further consideration. For the remainder of this section we assume that all input sequences have length at most $L$. We now give a formal definition of the constructed graph.

\begin{enumerate}
\item The vertex set contains a vertex $v_{i,j}$ representing the $l$-length subsequence in sequence $i$ starting at position $j$, for each $i$ and $j = 1, 2, \dots, L-l+1$. There are at most $n(L-l+1)$ vertices.
\item Each pair of vertices $v_{i,j}$ and $v_{i',j'}$, for $i \neq i'$ is joined by an edge when the Hamming distance between the two represented subsequences is at most $b$.
\end{enumerate}

\noindent This graph is represented by a symmetric adjacency matrix, where each entry is 0 for a non-edge, or a positive weight for an edge. We reduce the running time of searching $G$ by considering subgraphs of $G$, $\{G_0, G_1, \dots, G_{L-1}\}$, where $G_i$ is the subgraph induced by vertex a reference vertex, denoted as $v_{R,i}$, and its neighbors (for some arbitrary choice of reference sequence $R$).  We note that similar graph constructions have been used by Yang and Rajapakse \cite{YR04}, Pevzner and Sze \cite{PS00}, and Yang et al. \cite{YDA10}.

\subsubsection{Detection of Dense Subgraphs}

We implemented a modified version of the MCQ algorithm of Tomita and Seki \cite{TM03} to enumerate all dense subgraphs of size at least $m$. Each dense subgraph represents a motif instance. We chose MCQ due to the experimental work showing that when compared with other existing algorithms, it is the most efficient and practical for dealing with large graphs \cite{TAM}. The underlying idea of this branch-and-bound, depth-first search method is to begin with a small dense subgraph (i.e. single vertex), add a vertex to it if and only if it is connected to some minimum number of vertices already contained in the subgraph, and halt when no other vertices can be added.  If the subgraph has size at least $m$ then it is returned.  Each of the vertex sets outputted by the the MCQ algorithm corresponds to a set of $\ell$-mers that needed to be considered further in the next stage of the algorithm.

\subsubsection{Recovery of Splice Junctions and ESEs}

Each of set of $\ell$-mers identified in the previous step represents an instance of the {\em Consensus Sequence with Outliers} problem. Hence, Algorithm \ref{alg:rand_ptas} is used to distinguish between sets of $\ell$-mers that represent splice junctions or ESEs, or spurious patterns in the input data.  We note that the size of the set of $\ell$-mers (parameter $n$ in {\em Consensus Sequence with Outliers} problem formulation) is at least $m$.  The output is the set of all valid splice junctions, their consensus sequence, and outliers, or the set of all valid ESEs.  We will see in the next subsection how this step can be adapted to detect both these splicing elements.

\subsection{Detecting Co-occurring Splice Sites and ESEs}

We run SeeSite in a two-fold manner in order to detect the splice junctions and ESEs associated with weak splice sites. First, SeeSite is ran to detect all possible splice junctions. In this first run of SeeSite, Algorithm \ref{alg:rand_ptas} is used (with appropriate values of $r$ and $t$) to determine the set of $\ell$-mers that correspond to sets of splice junctions; the consensus sequence and outliers for each set of splice junctions are also outputted at this stage. The outliers correspond to splice junctions that are highly degenerate. The exon regions corresponding to a set of outliers are then input into the second run of SeeSite. In this second run, the ESEs are detected in each of these groups of exons. Running SeeSite in this two-fold manner allows us to detect both strong and weak splice junctions along with the ESEs associated with weak splice junctions. 

As previously mentioned, the {\em Consensus Sequence with Outliers} problem formulation can be adapted to tailor the search for splice sites or ESEs. When searching for splice sites in the first run of SeeSite, we set $k$, the number of outliers, equal to $\lfloor n/4 \rfloor$, where $n$ is the number of $\ell$-mers in the motif instance. When searching for ESEs in the second run of SeeSite, the number of outliers is equal to zero, that is Algorithm \ref{alg:rand_ptas} will return the majority sequence.

\section{Investigation of Splice Sites in the Human Genome}

The human genome has many multi-exon genes with excellent EST coverage and high-quality annotation.  Thus, it provides a good source of known splice sites on which we can evaluate SeeSite's ability to detect canonical and non-canonical motifs as well as their complementary ESEs. Our benchmark dataset consists of 10,000 known splice sites from the human genome (hg19 assembly) and its reference annotation (RefSeq).  To capture ESEs, we extracted two 100bp subsequences centered at the 5' and 3' splice sites flanking each known intron.

\subsection{Dectection of Canonical and Non-Canonical Splice Sites}

We ran SeeSite with the minimum size of the subgraph (parameter $m$) equal to 100, and varied the values of $\ell$, $d$ and $b$.  The parameter $n$, which is the number of $\ell$-mers that need to be considered in the third step of the algorithm, is at least $m$, and the maximum number of outlier sequences (parameter $k$) equal to $\lfloor n/4 \rfloor$. For each set of splice sites corresponding to one consensus sequence, there exists a (possibly empty) set of outlier sequences.  SeeSite was capable of detecting splice sites in 9,208 of the 10,000 genes considered, 87\% of these sites overlapped with known gene models that have been verified by ESTs.  

One of the main advantages of SeeSite is the accuracy in detecting weak splice sites.  A metric, referred to as the {\em consensus value} (CV), is used to gauge the degeneracy (or strength) of the splice site and is a index that ranges from 100 (perfect consensus) to 0 (worst consensus) \cite{SSH90,ZLC}.  Of the 9,208 sites detected by SeeSite, less than 5\% had a CV greater than 80, and more than two thirds had a CV less than 77.   The splice sites that were identified as ``outliers'' by SeeSite had a mean CV of 68, with the variance of the distribution of the CVs being 2.5.  

SeeSite detected all splice sites with well-known canonical forms, as well as, identified non-canonical sites. For example, the consensus splice site pattern that has GT(G/A)AGT for the first 6bp of the intron and (T/C)AG as the last 3bp of the intron is well-known canonical splicing form in {\em Homo sapien} data.  Although these splice sites are highly degenerate and leave GT-AG as the only reliable splice pattern in the {\em Homo sapien} data \cite{Stamm}, SeeSite, TopHat \cite{TopHat} and HMMSplicer \cite{HMMSplicer} were capable of detecting majority of these sites. SeeSite identified the GT-AG, GC-AG, and AT-AC splicing site patterns in 87\%, 3\%, and 0.5\% of the genes considered.  Of the GT-AG splicing sites identified, 2\% matched perfectly to the consensus (CV of 100), 4\% had a CV between 90 and 80, 26\% had a CV between 80 and 77, and the remaining 68\% had a CV less than 77.  

\begin{table}
\begin{center}
\begin{tabular}{p{5cm}|p{5cm}|p{5cm}}
\hline
Splicing Pattern 													& ESEs with High Correlation								&	ESEs with Low Correlation  \\ 
\hline
\hline
\multirow{3}{*}{\tt GT(G/A)AGT$\|$(T/C)AG} 		&{\tt AAAAGA}	(24\%) 											& {\tt ATGGCG} \\
 																			&{\tt AAGAAT}	(21\%) 											& {\tt CAAGAT} \\
 																			&{\tt GAAAAT}	(17\%) 											& {\tt ATGAGA}	 \\
 																			&{\tt TGGAAA}	(16\%) 											& {\tt ATGAGA}	 \\
\hline																		
\multirow{3}{*}{\tt GCA(T/A)G(G/T)$\|$(T/A)AC} &{\tt ATGAGA}	(33\%) 											& {\tt AAAAGA} \\
 																			&{\tt TACAGA}	(34\%) 											& {\tt GAAAAT} \\
 																			& 																			& {\tt ATGGAA} \\
 																			& 																			& {\tt CAAGAT} \\
\hline																		
\multirow{3}{*}{\tt ATG(C/A)T(G/A)$\|$(G/T)AC} &{\tt ATGGAA}	(41\%) 											& {\tt AAAAGA} \\
 																			&{\tt CAAGAT} (23\%) 											& {\tt GAAAAT} \\
 																			& 																			& {\tt TACAGA} \\
 																			& 																			& {\tt ATGAGA} \\

\hline
\end{tabular}
\end{center}
\caption{Some examples that illustrate the relationship between co-occuring splice sites and ESEs that control splicing expression. We searched for all possible ESEs in over 3000 weak splice sites of the form GT-AG, over 500 weak splice sites of the form GC-AC, and over 100 weak splice sites of the form AT-AC. We witnessed the existence of a number of ESEs that are either likely or unlikely to occur in the presence of splice sites that have a specific pattern.  For each splice site pattern we listed the ESEs with high correlation and low correlation.  Those with low correlation occurred in less than 10\% of the exons associated with the corresponding splicing pattern.  The percentage of occurrence is given in brackets for each of the ESEs with high correlation. }
\label{tab:eses}
\end{table}

\subsection{ESEs are Paired with Weaker Splice Sites}

Whereas most past computational methods search for either splicing site sites or ESEs, we sought pairs of patterns that demonstrate an unusually strong tendency to co-occur across exons.  In order to accomplish this task, we ran SeeSite on each set of outlier sequences with minimum subgraph size (parameter $m$) equal to $50$, varied values of $d$ and $b$, $\ell$ equal to $5$ and $6$, and $k=0$. The largest set of outlier sequences was $198$.  We validated our findings by comparing the ESEs found by SeeSite with those identified by RESCUE-ESE \cite{RESCUE-ESE}\footnote{The list of identified ESEs found by using RESCUE-ESE are here: {\tt http://genes.mit.edu/burgelab/rescue-ese/ESE.txt}.}.  Of the sequences found at this stage by SeeSite, 95\% were identified by Fairbrother et al \cite{RESCUE-ESE} as being an ESE.  A summary of our results are found in Table \ref{tab:eses}, which shows a pairing of ESEs and specific splice site patterns.  Our results give strong evidence toward the existence of co-occurring splicing elements---that is, a pairing of splice sites with specific ESEs.   For the weak splice sites of each form, GT-AG, GC-AC, and AT-AC, we determined the ESEs that are occur most frequently and those that occur the most infrequently and compared these across different splicing sites.  We witnessed that the occurrence of several ESEs have high correlation to the corresponding splicing site having a specific pattern (i.e.  GT-AG, GC-AC, and AT-AC), while others have a low correlation.  Hence, there exists strong evidence for a pairing of ESEs to splicing sites.  

In addition, we considered the number of exons that are associated with strong splice sites sites and that are paired with an ESE, with the number of exons that are associated with weak splice sites sites and that are paired with an ESE.  We found that 90\% of weak splice sites are paired with an ESE, as opposed to 30\% of strong splice sites.  In this context, we refer to a splice site as strong if the CV is greater than or equal to 85. This statistic supports the ongoing conjecture that co-occurring pairs are contributing to splicing by compensating for a lack of strong splicing signals. 

\section{Conclusion and Future Work}

SeeSite is a computational tool for detecting splice sites and ESEs, and identifying co-occuring relationships between these sites. Our results suggest the existence of several non-canonical splice site patterns and demonstrates a possible synergistic relationship between ESEs and different classes of splice site patterns.  Future experimental work is needed to resolve whether these relationships between splicing elements, and non-canonical splice sites are biologically significant or simply spurious correlations detected in the data. However, we believe this is unlikely given the strength and frequency of a number of the patterns.  Determining the exact biological relationship of these paired splicing elements, i.e. they could act negatively to promote exon skipping, whether these pairs are equally conserved over evolution warrants further investigation.

\end{document}